\newtheorem{remark}{Remark}
\newtheorem{lemma}{Lemma}
\newtheorem{theorem}{Theorem}
\newtheorem{definition}{Definition}
\newtheorem{proposition}{Proposition}
\DeclareMathAlphabet{\mathcal}{OMS}{cmsy}{m}{n}
\DeclarePairedDelimiter{\norm}{\lVert}{\rVert}
\newcommand{\R}{{\mathbb{R}}}
\newcommand{\argmin}{\textrm{arg}\min}
\newcommand{\setdef}[2][]{
	\left\{
	\ifblank{#1}{}{#1 \hspace{.1cm} \middle| \hspace{.1cm}}
	#2 
	\right\}
}
\newcommand{\zono}[1]{\langle #1 \rangle}
\newcommand{\Rpredict}{\tilde{\mathcal{R}}}
\newcommand{\Rmeas}{\hat{\mathcal{R}}}
\newcommand{\Cpredict}{\tilde{\mathcal{C}}}
\newcommand{\Cmeas}{\hat{\mathcal{C}}}
\def\algref#1{Algorithm~\ref{#1}}
\def\secref#1{Sec.~\ref{#1}}
\def\defref#1{Definition~\ref{#1}}
\def\figref#1{Fig.~\ref{#1}}
\def\propref#1{Proposition~\ref{#1}}
\def\eqref#1{(\ref{#1})}
\def\eqnref#1{(\ref{#1})}
\title{\LARGE \bf Data-Driven Set-Based Estimation using Matrix Zonotopes \\ with Set Containment Guarantees
  \thanks{
    $^*$Authors are with equal contributions. $^{1}$The author is with Jacobs University, Bremen. \texttt{a.alanwar@jacobs-university.de}. $^{2}$The authors are with the Division of Decision and Control Systems at KTH Royal Institute of Technology. \texttt{\{alberndt, hsan, kallej\}@kth.se}.
  }
}
\author{Amr~Alanwar$^{*,1}$, Alexander~Berndt$^{*,2}$, Karl~Henrik~Johansson$^{2}$, and Henrik~Sandberg$^{2}$}
\begin{document}

\maketitle
\begin{abstract}

We propose a method to perform set-based state estimation of an unknown dynamical linear system using a data-driven set propagation function. Our method comes with set-containment guarantees, making it applicable to safety-critical systems.
The method consists of two phases: (1) an offline learning phase where we collect noisy input-output data to determine a function to propagate the state-set ahead in time; and (2) an online estimation phase consisting of a time update and a measurement update. It is assumed that known finite sets bound measurement noise and disturbances, but we assume no knowledge of their statistical properties. These sets are described using zonotopes, allowing efficient propagation and intersection operations. We propose a new approach to compute a set of models consistent with the data and noise-bound, given input-output data in the offline phase. The set of models is utilized in replacing the unknown dynamics in the data-driven set propagation function in the online phase. Then, we propose two approaches to perform the measurement update. Simulations show that the proposed estimator yields state sets comparable in volume to the $3\sigma$ confidence bounds obtained by a Kalman filter approach, but with the addition of state set-containment guarantees. We observe that using constrained zonotopes yields smaller sets but with higher computational costs than unconstrained ones.
\end{abstract}

\section{Introduction}
Set-based estimation 
  involves the computation of
   a set, which is guaranteed to contain the system's
   true state at each time step  
   given bounded uncertainties \cite{conf:setmem1971}. 
Existing set-based observers require a system model 
  to propagate the state set at each time step \cite{conf:reducedsetbased,conf:distributedestimation}. 
We address the problem of propagating the state set 
  using only noisy offline input-output data and 
  merging this with online measurements 
  to obtain a time-varying state set
  which is guaranteed to contain the true system's state 
  at each time-step. 
This problem is essential in
  safety-critical applications \cite{conf:thesisalthoff}.

Two popular set-based estimators are 
interval observers
  and set-membership observers. 
Interval-based observers generally 
  generate state estimates by utilizing 
  an observer gain to fuse 
  a model-based time update of the state with 
  current measurements. 
For example, the authors in \cite{conf:interval4} 
  propose an exponentially stable interval-based observer for 
  time-invariant linear systems.
Set-membership observers generally follow a geometrical approach 
  by intersecting the state-space regions consistent 
  with the model with those from the measurements 
  to obtain the current state set
  \cite{conf:orthotope}.
This approach has been extended to sensor networks  
  with event-based communication in \cite{conf:disevent}
  and multi-rate systems in \cite{conf:intermulti}.
Various set representations have been used for set-membership observers
  such as ellipsoids \cite{conf:ellipsoide}, 
  polytopes \cite{conf:polytope} 
  and zonotopes \cite{conf:set-diff}.
Zonotopes are a special class of polytopes for which 
  one can efficiently compute linear maps, and Minkowski 
  sums -- both frequent operations performed by set-based observers.
   
All the aforementioned observers 
  use a model of the underlying system 
  to propagate the state set.
However, identifying a system model is often time-consuming, and 
  the identified model is not necessarily well-suited 
  for estimation or control.
Recent works based on Willems' fundamental lemma \cite{conf:willems}
  have shown that system trajectories can be 
  used directly to synthesize controllers.
The authors in \cite{Alpago2020EKF_DeepC} 
  present an extended Kalman filter and model predictive control (MPC)
  scheme computed directly from system trajectories.
Stability and robustness guarantees
  for such a data-driven control scheme are presented in \cite{conf:formulas},
  and for an MPC scheme in \cite{conf:mpcguarantees}. 
An alternative approach is to find a set of models that is consistent 
  with data and use this set of models to propagate 
  a state set \cite{conf:datadriven_reach}. 
  


Our contribution is a 
  novel method to perform set-based state estimation 
  with set-containment guarantees 
  given bounded, noisy measurements and known inputs.
The algorithm, summarized in Fig. \ref{fig:method}, consists of an \textit{offline learning phase} 
  to determine a state-propagation function $f(\cdot)$ directly 
  from data, and an \textit{online estimation phase} to perform
  a time update using $f(\cdot)$ and measurements iteratively to
  track the system state. A new approach to compute the set of models consistent with the data and noise bound from input-output data is proposed different from input-state data in \cite{conf:datadriven_reach,conf:ourjournal}. Then, we present two approaches to perform the measurement update utilizing 
  either the singular value decomposition (SVD) of the observation matrix or
  an optimization formulation.
We compare the approaches in simulation.
Our method is shown to yield set-based state estimates 
  similar in size to $3\sigma$ confidence bounds of
  an approach based on system identification and a Kalman filter, 
  but with the addition of set-containment guarantees. The code to recreate our findings is publicly available\footnotemark.

\footnotetext{\href{https://github.com/alexberndt/data-driven-set-based-estimation-zonotopes}{https://github.com/alexberndt/data-driven-set-based-estimation-zonotopes}}

The rest of this paper is outlined as follows. 
\secref{sec:preliminaries} introduces the preliminaries and problem statement.
We present our method in \secref{sec:method} and 
evaluate it in \secref{sec:evaluation}. Finally, 
\secref{sec:conclusions} concludes the paper.

\begin{figure*}[t] 
  \centering 
  \includegraphics[width=0.92\linewidth]{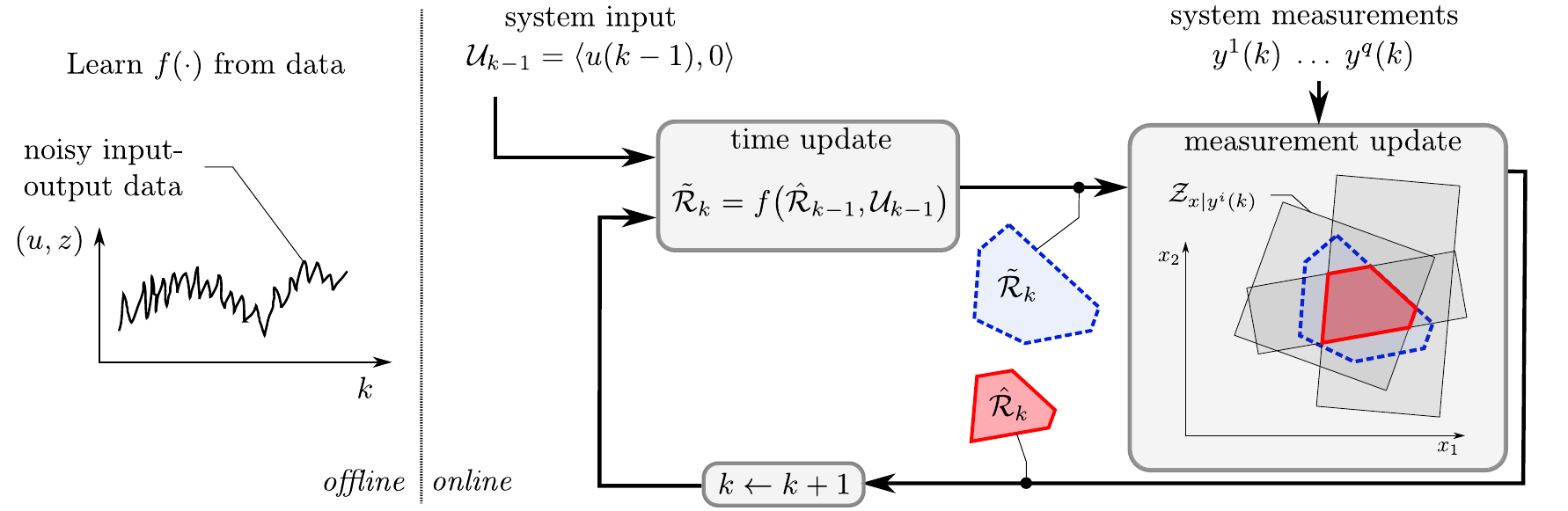}
  \caption{The proposed method showing the offline learning phase 
    yielding $f(\cdot)$ and the online estimation phase
    which utilizes $f(\cdot)$ to perform the time update, followed 
    by a measurement update yielding the set $\hat{\mathcal{R}}_k$ at time-step $k$.
  }
  \label{fig:method}
\end{figure*}

\section{Preliminaries and Problem Statement} 
\label{sec:preliminaries}

We denote the $i$-th element of a vector or list $A$ by $A^{(i)}$. We first introduce some set representations. 
\begin{definition}
{\normalfont(Zonotope \cite{conf:zono1998})}
\label{def:zonotope} 
    Given a center $c \in \mathbb{R}^n$ and a number $\xi \in \mathbb{N}$
      of generator vectors in a generator matrix 
      $G=[g^{(1)},...,g^{(\xi)}] \in \mathbb{R}^{n \times \xi}$, 
      a zonotope is a set
    \begin{equation}
      \mathcal{Z} = \Big\{ x \in \mathbb{R}^n \; \Big| \; x = c + \sum_{i=1}^\xi \beta^{(i)} \, g^{(i)} \, ,
      -1 \leq \beta^{(i)} \leq 1 \Big\}.
    \end{equation} 
    We use the shorthand notation $\mathcal{Z} = \zono{c,G}$.
\end{definition}
Given two zonotopes $\mathcal{Z}_1$ and $\mathcal{Z}_2$, 
  we use the notation $+$  
  for the Minkowski sum, 
  and $\mathcal{Z}_1 - \mathcal{Z}_2$ 
  to denote $\mathcal{Z}_1 + (- \mathcal{Z}_2)$ not the Minkowski difference.



\begin{definition} \label{def:mat_zonotope}
{\normalfont(Matrix zonotope \cite[p.52]{conf:thesisalthoff})}
  Given a center matrix $C \in \mathbb{R}^{n\times k}$ 
    and $\xi \in \mathbb{N}$ generator matrices 
    ${G}^{(i)} \in \mathbb{R}^{n \times k}$ 
    where $i \in \{1,\dots,\xi\}$,  
    a matrix zonotope is the set
  \begin{equation*}
    \mathcal{M} = 
    \Big\{   X \in \mathbb{R}^{n\times k} 
             \; \Big| \; 
             X = C + \sum_{i=1}^\xi {\beta}^{(i)} \, {G}^{(i)} \, 
             ,
             -1 \leq {\beta}^{(i)} \leq 1 
    \Big\}.
  \end{equation*}
  We use the notation $\mathcal{M} = \zono{C,{G}^{(1:\xi)}}$,
  where ${G}^{(1:\xi)}=[{G}^{(1)},\dots,{G}^{(\xi)}]$.
\end{definition}

\begin{definition} \label{def:intmat}(\normalfont{Interval matrix} \cite[p. 42]{conf:thesisalthoff})
An interval matrix $\mathcal{I}$ specifies the interval of all possible values for each matrix element between the left limit $\underline{I}$ and right limit $\bar{I}$:
\begin{align}
    \mathcal{I} = \begin{bmatrix} \underline{I},\bar{I}  \end{bmatrix}, \quad \underline{I},\bar{I} \in \mathbb{R}^{r \times c}
\end{align}
\end{definition}






We consider estimating the 
  set of all possible system states 
  using an array of $q$ sensors. Our system is described as 
  \begin{subequations}\label{eq:sys}
  \begin{align}
          x(k+1) &= A_{\text{tr}}x(k) + B_{\text{tr}} u(k) + w(k), \label{eq:underlying_system} \\
          y^i(k) &= C^i x(k) + v^i(k), \;\; i \in \{1,\dots,q \} \label{eq:observations}, 
  \end{align}
  \end{subequations}
where 
  $x(k) \in \mathbb{R}^n$ is the system state, 
  $u(k) \in \mathbb{R}^m$ the input,
  $y^i(k) \in \mathbb{R}^{p_i}$ the measurement of sensor $i$, 
  $x(0) \in \mathcal{X}_0$ the initial condition
  where $\mathcal{X}_0$ is the initial bounding zonotope.
Furthermore, the system matrices
  $A_{\text{tr}} \in \mathbb{R}^{n \times n}$ and $B_{\text{tr}} \in \mathbb{R}^{n \times m}$
  are \textit{unknown} whereas
  $C^i \in \mathbb{R}^{p_i \times n}$ is \textit{known} 
  for all $i \in \{1,\dots,q \}$. 
The noise $w(k) \in \mathcal{Z}_w$  
  and $v^i(k) \in \mathcal{Z}_{v,i}$ 
  are assumed to belong to the bounding zonotopes $\mathcal{Z}_w  = \zono{c_w,G_w} \subset \mathbb{R}^{n}$ 
  and $\mathcal{Z}_{v,i} = \zono{c_{v,i},G_{v,i}} \subset \mathbb{R}^{p_i}$ for $i \in \{1,\dots,q \}$, respectively. 
We denote the Frobenius norm by $\|.\|_F$ 
  and the null space of a matrix $A$ by $\texttt{ker}(A)$. 
  We compute the pseudoinverse of an interval matrix by adapting \cite[Thm 2.40]{conf:inverseInterval}. The pseudoinverse of an interval matrix is denoted by $\dagger$.
  
Let ${\mathcal{R}}_k$ denote a set containing $x(k)$ 
  given the \textit{exact} system model
  and bounded, but \textit{unknown}, process and measurement noise.
The problem addressed in this paper is to develop an algorithm 
  that returns a set
  $\hat{\mathcal{R}}_k \supseteq {\mathcal{R}}_k$,
  which is \textit{guaranteed} to contain the true state $x(k)$
  at each time instance $k$, i.e., $x(k) \in \hat{\mathcal{R}}_k$ for all $k$,
  given input-output data and bounds for model uncertainties and measurement noise without knowledge of the model $\begin{bmatrix}A_{\text{tr}} & B_{\text{tr}} \end{bmatrix}$.

\section{Data-driven Set-based Estimation}
\label{sec:method}

Our proposed data-driven set estimator consists of two phases: 
  an \textit{offline learning phase} and an \textit{online estimation phase}.
In the offline phase, we compute the function 
  to perform the time update. The online phase consists of iteratively performing 
  a time update and a measurement update. We denote the time and measurement updated sets at $k$ by $\Rpredict_k \subset \mathbb{R}^n$ 
  and $\Rmeas_k  \subset \mathbb{R}^n$, respectively.
  

\subsection{Offline Learning Phase}
\label{sec:training}

The objective of this phase is to compute 
  a function $f: \mathbb{R}^n \times \mathbb{R}^m \to \mathbb{R}^n$, 
  such that $\Rpredict_{k+1}=f(\Rmeas_k,\mathcal{U}_k)$, i.e., 
  $f$ returns  $\Rpredict_{k+1}$ 
  given a known input zonotope $\mathcal{U}_k$ and the 
    measurement updated set $\Rmeas_k$ at time-step $k$ 
    such that we can guarantee $x(k+1) \in \Rpredict_{k+1}$ for all $k$.
During this phase, 
  we assume that we have offline an access to an input sequence $u(k)$ and noisy output $z^i(k)$ such that 
\begin{align}
  z^i(k)  &= C^i x(k) + \gamma^i(k),
  \label{eq:training_measurements}
\end{align}
  where the noise $\gamma^i(k)$ 
  is bounded by the zonotope $\mathcal{Z}_{\gamma,i} = \zono{c_{\gamma,i},G_{\gamma,i}}$, 
  i.e.,
  $\gamma^i(k) \in \mathcal{Z}_{\gamma,i}, \forall k$. We have for all sensors vertically combined noisy output $z(k) = \begin{bmatrix} z^{1^T}(k)& ...& z^{q^T}(k) \end{bmatrix}^T$ and similarly for $\gamma$ and $C$. For the sake of clarity, we differentiate the notation of the offline noisy output $z^i(k)$ from the online noisy output $y^i(k)$ and similarly for the measurement noise. 
Given an experiment yielding a sequence of noisy data of length $T$, 
  we can construct the following sequences 
\begin{align}
  \begin{split}
  Z^{+} &= \begin{bmatrix}z(1)&\dots&z(T)\end{bmatrix}, \\
  Z^{-} &= \begin{bmatrix}z(0)&\dots&z(T-1)\end{bmatrix}, \\
  U^{-} &= \begin{bmatrix}u(0)&\dots&u(T-1)\end{bmatrix}.
  \end{split}
  \label{eq:data_sequences}
\end{align}
We further construct
\begin{align*}
  Z&= \begin{bmatrix}z(0)&\dots&z(T)\end{bmatrix}, 
\end{align*}  
and similarly for other signals. The data $D = \begin{bmatrix}
  U^{-} & Z
\end{bmatrix}$ can be from one sensor or multiple sensors. Furthermore, we denote the sequence of 
  \textit{unknown} process noise $w(k)$ as 
    ${W}^{-} =
    \begin{bmatrix}  
        {w}(0) & \dots & {w}(T{-}1) 
    \end{bmatrix}$.
Here,   
  ${W}^{-} \in \mathcal{M}_w$ 
  where $\mathcal{M}_w = \zono{C_{\mathcal{M},w}, G^{(1: \xi T)}_{\mathcal{M},w}}$ 
  is the matrix zonotope resulting from the concatenation of multiple
  noise zonotopes 
  $\mathcal{Z}_w=\zono{c_{w},[g_{w}^{(1)}, \dots, g_{w}^{(\xi)}]}$ 
  as
\begin{equation*}
  \begin{split}
    C_{\mathcal{M},w} &= \begin{bmatrix}c_{w} & \dots & c_{w}\end{bmatrix}, \\
    G^{(1+(i-1)T)}_{\mathcal{M},w} &= \begin{bmatrix} g_{w}^{(i)} & 0_{n \times  (T-1)}\end{bmatrix}, \\
    G^{(j+(i-1)T)}_{\mathcal{M},w} &= \begin{bmatrix} 0_{n \times  (j-1)} &g_{w}^{(i)}  & 0_{n \times  (T-j)}\end{bmatrix}, \\
    G^{(T+(i-1)T)}_{\mathcal{M},w} &= \begin{bmatrix} 0_{n \times (T-1)}& g_{w}^{(i)}\end{bmatrix},
  \end{split}
\end{equation*}
for all $i =\{1, \dots, \xi\}$, $j=\{2,\dots,T-1\}$ \cite{conf:datadriven_reach}. In a similar fashion, we describe the unknown noise and matrix zonotope of 
  $\gamma(k)$ as $\Gamma^{+},\Gamma^{-} \in \mathcal{M}_{\gamma} = \zono{ C_{\mathcal{M},{\gamma}}, G^{(1: \xi T)}_{\mathcal{M},{\gamma}} }$. We denote all system matrices $\begin{bmatrix} A & B \end{bmatrix}$ that are consistent with the data:
\begin{align*}
    \mathcal{N}_{\Sigma} = \{ & \begin{bmatrix} A & B \end{bmatrix} | \; X^{+} = A X^{-} + B U^{-} + W^{-},\\
    & Z^{-}=CX^{-} + \Gamma^{-}, W^{-} \in \mathcal{M}_w, \Gamma^{+}\in \mathcal{M}_\gamma, \\
    & \Gamma^{-} \in \mathcal{M}_\gamma \}.
\end{align*}
By definition, $\begin{bmatrix} A_{\text{tr}} & B_{\text{tr}} \end{bmatrix} \in \mathcal{N}_{\Sigma}$ as $\begin{bmatrix} A_{\text{tr}} & B_{\text{tr}} \end{bmatrix}$ is one of the systems that are consistent with the data. The following theorem finds a set of models $\mathcal{M}_\Sigma$ that over-approximates $\mathcal{N}_{\Sigma}$, i.e., $\mathcal{N}_{\Sigma} \subseteq \mathcal{M}_\Sigma$, which defines $f(\cdot)$ introduced above. For this,
  we aim to determine the mapping of the observation $Z^{+}$ and $Z^{-}$
  to the corresponding state-space region.
Specifically, 
  we construct a zonotope $\mathcal{Z}_{x|z^i(k)} \subset \mathbb{R}^n$ 
  that contains all \textit{possible} $x \in \mathbb{R}^n$
  given $z^i(k)$, $C^i$ and 
  bounded noise $\gamma^i(k) \in \mathcal{Z}_{\gamma,i}$ 
  satisfying \eqnref{eq:training_measurements}, for each $i$.
This can be written as
\begin{align} 
    \mathcal{Z}_{x|z^i(k)} = \Big\{ x \in \mathbb{R}^n \; \Big| \; 
        C^i x = z^i(k) - \mathcal{Z}_{\gamma,i}
    \Big\}.
    \label{eq:measurement_set}
\end{align}
Extending \eqref{eq:measurement_set} to a matrix zonotope allows to find the mapping of $Z^{+}$ and $Z^{-}$ to the state space which is utilized to compute the $\mathcal{M}_\Sigma$. 
  We omit the time index $k$ and sensor index $i$ when possible for simplicity. We assume a prior known upper bound $M$ on the state trajectory, i.e., $M \geq \norm{x}_2$. 
  \begin{lemma}
 \label{lm:sigmaM}
Given input-output trajectories $D = \begin{bmatrix}
  U^{-} & Z
\end{bmatrix}$ of the system \eqref{eq:sys}. Then, the matrix zonotope 
\begin{align}
    \mathcal{M}_\Sigma = (\mathcal{M}^{+}_{x|z} - \mathcal{M}_w) \begin{bmatrix} \mathcal{M}^{-}_{x|z} \\ U^{-}\end{bmatrix}^\dagger
    \label{eq:zonoAB}
\end{align} 
 contains all matrices $\begin{bmatrix}A & B \end{bmatrix}$ that are consistent with the data $D$ and the noise bounds, i.e., $\mathcal{N}_{\Sigma} \subseteq \mathcal{M}_\Sigma$, with $\mathcal{M}^{+}_{x|z} = \zono{C^{+}_{\mathcal{M},x|z},G_{\mathcal{M},x|z}^{(1:\xi T +1)}}$ and  $\mathcal{M}^{-}_{x|z} = \zono{C^{-}_{\mathcal{M},x|z},G_{\mathcal{M},x|z}^{(1:\xi T +1)}}$  where
    \begin{align}
        C^{+}_{\mathcal{M},x|z}  &= V_1 \Sigma_{r \times r}^{-1}P_1^\top\big( Z^{+} - C_{\mathcal{M},{\gamma}} \big),\label{eq:Cplusxgy} \\
         C^{-}_{\mathcal{M},x|z}  &= V_1 \Sigma_{r \times r}^{-1}P_1^\top\big( Z^{-} - C_{\mathcal{M},{\gamma}} \big),\label{eq:Cxgy}\\
        G_{\mathcal{M},x|z}^{(i)} &=  V_1 \Sigma_{r \times r}^{-1}P_1^\top G^{(i)}_{\mathcal{M},{\gamma}}, \quad i=\{1,\dots,\xi T\},\label{eq:Gxgy1} \\
        G_{\mathcal{M},x|z}^{(\xi T +1)} &=M V_2 1_{(n -r) \times T}, \label{eq:Gxgy2} 
    \end{align} 
    for all $M \geq \norm{x}_2$, with $P_1$, $V_1$, $\Sigma$ and $V_2$ 
      obtained from the SVD of $C$.
    Assuming $C$ has rank $r$, then 
    \begin{align}
      C = \begin{bmatrix}
          P_1 & P_2
        \end{bmatrix}
        \begin{bmatrix}
          \Sigma_{r \times r} & 0_{r \times (n-r)} \\ 
          0_{(p-r)\times r} & 0_{(p-r)\times(n-r)}
        \end{bmatrix}
        \begin{bmatrix}
          V_1^\top \\ V_2^\top
        \end{bmatrix},
        \label{eq:svd_of_C}
    \end{align}
    where a matrix with non-positive index is an empty matrix.
    \label{prop:measurement_zonotope}
  \end{lemma}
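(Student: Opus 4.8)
The plan is to prove the containment $\mathcal{N}_{\Sigma} \subseteq \mathcal{M}_\Sigma$ by combining two ingredients: first, over-approximating the two \emph{unknown} state sequences $X^{+}$ and $X^{-}$ by the matrix zonotopes $\mathcal{M}^{+}_{x|z}$ and $\mathcal{M}^{-}_{x|z}$ built from the noisy outputs $Z^{+},Z^{-}$; and second, writing every consistent model $\begin{bmatrix}A & B\end{bmatrix}$ explicitly in terms of these sequences through a pseudoinverse of the regressor $\begin{bmatrix}X^{-}\\U^{-}\end{bmatrix}$. I would take an arbitrary $\begin{bmatrix}A & B\end{bmatrix}\in\mathcal{N}_{\Sigma}$, exhibit witnesses $X^{\pm}$, $W^{-}$, $\Gamma^{\pm}$ satisfying the defining equations, and then show that the closed-form expression for $\begin{bmatrix}A & B\end{bmatrix}$ lands inside $\mathcal{M}_\Sigma$.

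First I would establish the state-recovery step, which is the heart of the construction. Using the SVD \eqref{eq:svd_of_C} we have $C = P_1\Sigma_{r\times r}V_1^\top$, so from $Z = CX + \Gamma$, left-multiplication by $\Sigma_{r\times r}^{-1}P_1^\top$ together with $P_1^\top P_1 = I$ and $P_1^\top P_2 = 0$ yields $V_1^\top X = \Sigma_{r\times r}^{-1}P_1^\top(Z-\Gamma)$, pinning down the component of $X$ in the row space of $C$. Decomposing $X = V_1 V_1^\top X + V_2 V_2^\top X$ gives
\[
X = V_1\Sigma_{r\times r}^{-1}P_1^\top(Z-\Gamma) + V_2 V_2^\top X.
\]
Substituting $\Gamma \in \mathcal{M}_\gamma = \zono{C_{\mathcal{M},\gamma},G_{\mathcal{M},\gamma}^{(1:\xi T)}}$ produces the center and the first $\xi T$ generators in \eqref{eq:Cplusxgy}--\eqref{eq:Gxgy1} (the sign of the noise generators being irrelevant since $\beta^{(i)}\in[-1,1]$ is symmetric), applied to $Z^{+}$ and $Z^{-}$ separately. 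The residual $V_2 V_2^\top X$ lies in $\texttt{ker}(C)$ and is \emph{not} determined by the data; here the bound $M \geq \norm{x}_2$ enters, since $\norm{V_2^\top x}_2 \le \norm{x}_2 \le M$, and this term is over-approximated by the extra generator $G^{(\xi T+1)}_{\mathcal{M},x|z}=M V_2 1_{(n-r)\times T}$ in \eqref{eq:Gxgy2}. This establishes $X^{+} \in \mathcal{M}^{+}_{x|z}$ and $X^{-} \in \mathcal{M}^{-}_{x|z}$.

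Next I would carry out the inversion step. For any $\begin{bmatrix}A & B\end{bmatrix}\in\mathcal{N}_{\Sigma}$ the dynamics give $X^{+} - W^{-} = \begin{bmatrix}A & B\end{bmatrix}\begin{bmatrix}X^{-}\\U^{-}\end{bmatrix}$, and provided the regressor $\begin{bmatrix}X^{-}\\U^{-}\end{bmatrix}$ has full row rank $n+m$ (a persistency-of-excitation / sufficient-data requirement, so that $\begin{bmatrix}X^{-}\\U^{-}\end{bmatrix}\begin{bmatrix}X^{-}\\U^{-}\end{bmatrix}^{\dagger} = I$), we may solve $\begin{bmatrix}A & B\end{bmatrix} = (X^{+} - W^{-})\begin{bmatrix}X^{-}\\U^{-}\end{bmatrix}^{\dagger}$. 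Because $X^{+}\in\mathcal{M}^{+}_{x|z}$, $W^{-}\in\mathcal{M}_w$ and $X^{-}\in\mathcal{M}^{-}_{x|z}$, replacing each quantity by its over-approximating matrix zonotope and using the interval-matrix pseudoinverse of \cite{conf:inverseInterval} to over-approximate $\begin{bmatrix}X^{-}\\U^{-}\end{bmatrix}^{\dagger}$ places $\begin{bmatrix}A & B\end{bmatrix}$ inside $(\mathcal{M}^{+}_{x|z}-\mathcal{M}_w)\begin{bmatrix}\mathcal{M}^{-}_{x|z}\\U^{-}\end{bmatrix}^{\dagger} = \mathcal{M}_\Sigma$. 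Since $\begin{bmatrix}A & B\end{bmatrix}$ was arbitrary, $\mathcal{N}_{\Sigma}\subseteq\mathcal{M}_\Sigma$.

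The two delicate points I expect to do most of the work are: (i) justifying that the null-space generator in \eqref{eq:Gxgy2}, scaled by the coarse bound $M$, genuinely over-approximates the undetermined component $V_2 V_2^\top X$ across all $T$ columns simultaneously, which is exactly where the containment guarantee can be lost if the bound is handled carelessly; and (ii) making the pseudoinverse step rigorous, i.e.\ showing that passing from the exact identity $\begin{bmatrix}A & B\end{bmatrix}=(X^{+}-W^{-})\begin{bmatrix}X^{-}\\U^{-}\end{bmatrix}^{\dagger}$ to the interval/zonotope pseudoinverse only \emph{enlarges} the resulting set, so that containment is preserved rather than broken. The full-row-rank hypothesis on $\begin{bmatrix}X^{-}\\U^{-}\end{bmatrix}$ is the implicit assumption that ties these two steps together.
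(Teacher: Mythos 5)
Your proposal follows essentially the same two-step route as the paper's own proof: first the SVD-based reverse mapping that places the unknown state sequences $X^{+}$ and $X^{-}$ inside $\mathcal{M}^{+}_{x|z}$ and $\mathcal{M}^{-}_{x|z}$ (your decomposition $X = V_1V_1^\top X + V_2V_2^\top X$ is just a cleaner statement of the paper's range-space/kernel argument with the $MV_2$ generator), then the rearrangement of $AX^{-}+BU^{-}=X^{+}-W^{-}$ via the pseudoinverse of the regressor to conclude $\begin{bmatrix}A & B\end{bmatrix}\in\mathcal{M}_\Sigma$. If anything, you are more explicit than the paper about the implicit requirements (full row rank of $\begin{bmatrix}X^{-\top} & U^{-\top}\end{bmatrix}^\top$ and the enlargement property of the interval-matrix pseudoinverse), both of which the paper's proof glosses over in its ``rearranging'' step.
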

  \begin{proof}
    From \eqnref{eq:svd_of_C}, 
      we rewrite \eqnref{eq:training_measurements} 
      as ${P_1\Sigma V_1^\top x = z - \gamma}$,
      so $x = V_1\Sigma^{-1}P_1^\top (z - \gamma)$.
    Since $\gamma$ is bounded by ${\mathcal{Z}_{\gamma}= \zono{c_{\gamma},G_{\gamma}}}$,
    we can write  
      \begin{align*}
        x= \underbrace{V_1 \Sigma^{-1}P_1^\top\big( z - c_{\gamma}\big)}_{c_{x|z}} - 
          \underbrace{V_1 \Sigma^{-1}P_1^\top G_{\gamma}}_{G_{x|z}'} \beta, \;\; |\beta| \leq 1.
      \end{align*}
    This set corresponds to all possible $x$ values 
      within the range space of $C$ satisfying \eqnref{eq:training_measurements}.
    By definition, if $r = n$, then ${V_2 = \emptyset}$, 
      $V_1$ spans the domain of $x$,
      and $\zono{c_{x|z},G_{x|z}'}$ sufficiently defines all possible $x$
      satisfying \eqnref{eq:training_measurements}.
    However, 
      if $r < n$, $V_1$ only spans a subset of the domain of $x$.
    To ensure $\mathcal{Z}_{x|z}$ contains all possible $x$
      we include a basis for $\texttt{ker}(C)$ in $G_{x|z}$
    by appending the generator $V_2M$ to $G_{x|z}$, and ensuring $M \geq \|x\|_2$ such that $V_2M$ includes all $x$ values in the directions of $V_2$.
    In both cases for $r$, the generator matrix can be written as
    \begin{align*}
      G_{x|z} 
      = \begin{bmatrix} 
        G_{x|z}' & V_2M 
      \end{bmatrix} 
      = \begin{bmatrix} 
        V_1 \Sigma^{-1}P_1^\top G_{\gamma} & V_2M 
      \end{bmatrix}, 
    \end{align*}
    and the set $\mathcal{Z}_{x|z} = \zono{c_{x|z}, G_{x|z}}$.
    This result extends to the case when $r < p$ using similar argumentation in the respective cases $r = n$ and $r < n$. Considering the matrix version of $\mathcal{Z}_{x|z}$ results in proving $\mathcal{M}^{+}_{x|z}$ and $\mathcal{M}^{-}_{x|z}$. Then, we extend the proof of \cite[Lem.1]{conf:ourjournal} for input-output data: For any $\begin{bmatrix}A & B \end{bmatrix} \in \mathcal{N}_\Sigma$, 
we know that there exists a $W^{-} \in \mathcal{M}_w$ such that 
\begin{align}
    A X^{-} + B U^{-} = X^{+} - W^{-}.
    \label{eq:pf1_1}
\end{align}
Every $W^{-} \in \mathcal{M}_w$ can be represented by a specific choice $\hat{\beta}^{(i)}_{\mathcal{M},w}$, $-1 \leq \hat{\beta}^{(i)}_{\mathcal{M},w} \leq 1$, $i=1,\dots,\xi_{\mathcal{M},w}$, that results in a matrix inside the matrix zonotope $\mathcal{M}_w$:
\begin{align*}
    W^{-} &= C_{\mathcal{M},w} + \sum_{i=1}^{\xi_{\mathcal{M},w}} \hat{\beta}^{(i)}_{\mathcal{M},w} G_{\mathcal{M},w}^{(i)}.
\end{align*}
Rearranging \eqref{eq:pf1_1} and considering $\mathcal{M}^{+}_{x|z}$ and $\mathcal{M}^{-}_{x|z}$ as an over-approximation of $X^{+}$ and $X^{-}$, respectively, yields
\begin{align}
   \begin{bmatrix} A\!\! &\! B \end{bmatrix} {=}\!\! \left(\!\! \mathcal{M}^{+}_{x|z} {-}  C_{\mathcal{M},w} {-}\sum_{i=1}^{\xi_{\mathcal{M},w}} \hat{\beta}^{(i)}_{\mathcal{M},w} G_{\mathcal{M},w}^{(i)} \right)\!\! \begin{bmatrix}
    \mathcal{M}^{-}_{x|z} \\ U^{-} 
   \end{bmatrix}^\dagger
   \label{eq:pf2}
\end{align}
Hence, for all $\begin{bmatrix}A & B \end{bmatrix} \in \mathcal{N}_{\Sigma}$, 
there exists $\hat{\beta}^{(i)}_{\mathcal{M},w}$, ${-1\leq\hat{\beta}^{(i)}_{\mathcal{M},w}\leq 1}$, $i=1,\dots,\xi_{\mathcal{M},w}$, such that \eqref{eq:pf2} holds. Therefore, for all $\begin{bmatrix}A & B \end{bmatrix} \in \mathcal{N}_{\Sigma}$, it also holds that $\begin{bmatrix} A & B \end{bmatrix} \in \mathcal{M}_\Sigma$ as defined in \eqref{eq:zonoAB}, which concludes the proof.
  \end{proof}
Given that we have found a matrix zonotope $\mathcal{M}_\Sigma$ that contains the true system dynamics $\begin{bmatrix} A_{\text{tr}} & B_{\text{tr}} \end{bmatrix} {\in} \mathcal{M}_{\Sigma}$, we can utilize it in computing the time update reachable set $\Rpredict_k$ in the following theorem. 
\begin{theorem}
\label{th:reach_lin}
The set $\Rpredict_k$ over-approximates the exact reachable set, i.e., $\Rpredict \supseteq \mathcal{R}_{k}$ where
\begin{align} \label{eq:Rp}
    \Rpredict_{k+1} = \mathcal{M}_\Sigma ( \Rpredict_{k} \times \mathcal{U}_k ) + \mathcal{Z}_w, 
\end{align}
and $\Rpredict_{0} = \mathcal{X}_0$.
\end{theorem}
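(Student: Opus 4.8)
The plan is to proceed by induction on the time-step $k$, leaning on the one fact that makes the data-driven set meaningful: by Lemma~\ref{lm:sigmaM} the true dynamics satisfy $\begin{bmatrix} A_{\text{tr}} & B_{\text{tr}} \end{bmatrix} \in \mathcal{N}_\Sigma \subseteq \mathcal{M}_\Sigma$. First I would record the recursion obeyed by the \emph{exact} reachable set. Since $\mathcal{R}_k$ is the set of states reachable under the true model and the bounded process noise, it satisfies $\mathcal{R}_{k+1} = \begin{bmatrix} A_{\text{tr}} & B_{\text{tr}} \end{bmatrix}\,(\mathcal{R}_k \times \mathcal{U}_k) + \mathcal{Z}_w$ with $\mathcal{R}_0 = \mathcal{X}_0$, mirroring the defining recursion \eqref{eq:Rp} for $\Rpredict_k$ but with the single true matrix in place of the model set $\mathcal{M}_\Sigma$.

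For the base case, $\Rpredict_0 = \mathcal{X}_0 = \mathcal{R}_0$, so containment holds trivially. For the inductive step I would assume $\mathcal{R}_k \subseteq \Rpredict_k$ and take an arbitrary point $p \in \mathcal{R}_{k+1}$. By the recursion above there exist $x \in \mathcal{R}_k$, $u \in \mathcal{U}_k$ and $w \in \mathcal{Z}_w$ with $p = \begin{bmatrix} A_{\text{tr}} & B_{\text{tr}} \end{bmatrix}\begin{bmatrix} x \\ u \end{bmatrix} + w$. The induction hypothesis gives $\begin{bmatrix} x^\top & u^\top \end{bmatrix}^\top \in \Rpredict_k \times \mathcal{U}_k$, and since $\begin{bmatrix} A_{\text{tr}} & B_{\text{tr}} \end{bmatrix} \in \mathcal{M}_\Sigma$, the product $\begin{bmatrix} A_{\text{tr}} & B_{\text{tr}} \end{bmatrix}\begin{bmatrix} x \\ u \end{bmatrix}$ is one element of the set-valued product $\mathcal{M}_\Sigma\,(\Rpredict_k \times \mathcal{U}_k)$. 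Adding $w \in \mathcal{Z}_w$ then places $p$ in $\mathcal{M}_\Sigma(\Rpredict_k \times \mathcal{U}_k) + \mathcal{Z}_w = \Rpredict_{k+1}$, closing the induction.

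The step that needs care is the semantics of the matrix-zonotope/zonotope product $\mathcal{M}_\Sigma\,\mathcal{S}$. The exact set of products $\{M s : M \in \mathcal{M}_\Sigma,\ s \in \mathcal{S}\}$ is bilinear in the zonotope parameters and is generally not itself a zonotope, so the zonotope returned by the implemented product is an over-approximation; I must confirm that this realization still contains every pointwise product $M s$, which is exactly what the argument above requires. A related subtlety, and the main conceptual gap relative to the exact set, is that the set product decouples the choice of matrix in $\mathcal{M}_\Sigma$ from the choice of $\begin{bmatrix} x \\ u \end{bmatrix}$, whereas $\mathcal{R}_{k+1}$ is generated by the single fixed matrix $\begin{bmatrix} A_{\text{tr}} & B_{\text{tr}} \end{bmatrix}$; since this decoupling only enlarges the reachable set, the containment $\Rpredict_k \supseteq \mathcal{R}_k$ is preserved and in fact made conservative, which is precisely the set-containment guarantee sought.
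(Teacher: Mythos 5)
Your proof is correct and takes essentially the same route as the paper: the paper's own proof is a one-sentence argument that since $\begin{bmatrix} A_{\text{tr}} & B_{\text{tr}} \end{bmatrix} \in \mathcal{M}_\Sigma$ by Lemma~\ref{lm:sigmaM} and both recursions start from $\mathcal{X}_0$, containment follows --- you have simply made the implicit induction explicit and added a careful (and valid) remark about the semantics of the matrix-zonotope product. Nothing in your elaboration departs from or conflicts with the paper's argument.
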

\begin{proof}
As $\begin{bmatrix} A_{\text{tr}} & B_{\text{tr}} \end{bmatrix} {\in} \mathcal{M}_{\Sigma}$ according to Lemma~\ref{lm:sigmaM} and starting from the same initial set $\mathcal{X}_0$, it follows that ${\Rpredict_{k}{\supseteq}\mathcal{R}_{k}}$.
\end{proof}

\subsection{Online Estimation Phase using Zonotopes}
\label{sec:estimation_zonotopes}
 
In this subsection, we present the \textit{online estimation phase}.
We are now considering the system \eqnref{eq:underlying_system} 
  with observations \eqnref{eq:observations}.
This phase consists of a time update 
  and a measurement update.
In \secref{sec:training}, we derived the function $f(\cdot)$ 
  for the time update.  
We next present 
  two approaches to perform the measurement update.

\subsubsection{Approach 1 - Reverse-Mapping} 
For this approach,
  we aim to determine the mapping of an observation $y^i(k)$ 
  to the corresponding state-space region.
Similar to Lemma \ref{lm:sigmaM}, 
  we construct a zonotope $\mathcal{Z}_{x|y^i(k)} \subset \mathbb{R}^n$ 
  that contains all \textit{possible} $x \in \mathbb{R}^n$
  given $y^i(k)$, $C^i$ and 
  bounded noise $v^i(k) \in \mathcal{Z}_{v,i}$ 
  satisfying \eqnref{eq:observations}, for each $i$. 
  \begin{proposition}
    Assume $\|x\|_2 \leq K$.
    Given a measurement $y^i(k)$ with noise  
      $v^i(k) \in \mathcal{Z}_{v,i} = \zono{c_{v,i},G_{v,i}}$ 
      satisfying \eqnref{eq:observations},
    the possible states $x$ that correspond to this measurement        
    are contained within the zonotope
    $ \mathcal{Z}_{x|y^i} = \zono{c_{x|y^i},G_{x|y^i}},$
    where
    \begin{equation}
      \begin{split}
        c_{x|y^i} &= V_1 \Sigma_{r^i \times r^i}^{-1}P_1^\top\big( y^i(k) - c_{v,i} \big), \\
        G_{x|y^i} &= \begin{bmatrix} 
          V_1 \Sigma_{r^i \times r^i}^{-1}P_1^\top G_{v,i} & V_2 M
      \end{bmatrix},
      \end{split}
      \label{eq:prop_1_eqn}
    \end{equation} 
    for all $M \geq K$, with $P_1$, $V_1$, $\Sigma$ and $V_2$ 
      obtained from the SVD of $C^i$ as in \eqref{eq:svd_of_C}.
    \label{prop:measurement_zonotope}
  \end{proposition}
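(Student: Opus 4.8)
The plan is to follow verbatim the measurement-set construction used in the proof of Lemma~\ref{lm:sigmaM}, now applied to the online measurement $y^i(k)$ and its noise $v^i(k)$ in place of $z^i(k)$ and $\gamma^i(k)$. First I would drop the indices $k,i$ for readability and start from the observation equation $y = Cx + v$. Substituting the SVD \eqref{eq:svd_of_C} of $C^i$, the zero singular-value block eliminates the $P_2,V_2$ contributions, leaving $C = P_1\Sigma_{r^i\times r^i}V_1^\top$ and hence $P_1\Sigma_{r^i\times r^i}V_1^\top x = y - v$. Left-multiplying by $P_1^\top$ (so that $P_1^\top P_1 = I$) and inverting $\Sigma_{r^i\times r^i}$ isolates the range-space coordinates $V_1^\top x = \Sigma_{r^i\times r^i}^{-1}P_1^\top(y - v)$, equivalently $V_1 V_1^\top x = V_1 \Sigma_{r^i\times r^i}^{-1}P_1^\top(y - v)$.

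Next I would parametrize the noise. Since $v \in \mathcal{Z}_{v,i} = \zono{c_{v,i},G_{v,i}}$, write $v = c_{v,i} + G_{v,i}\beta$ with $\abs{\beta}\leq 1$. Substituting and splitting off the affine term yields
\begin{equation*}
  V_1 V_1^\top x = \underbrace{V_1 \Sigma_{r^i\times r^i}^{-1}P_1^\top(y - c_{v,i})}_{c_{x|y^i}} - V_1 \Sigma_{r^i\times r^i}^{-1}P_1^\top G_{v,i}\,\beta,
\end{equation*}
which reproduces exactly the center $c_{x|y^i}$ and the first generator block $V_1 \Sigma_{r^i\times r^i}^{-1}P_1^\top G_{v,i}$ of \eqref{eq:prop_1_eqn}. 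Thus the range-space part of every admissible $x$ is captured by that zonotope.

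The main obstacle is accounting for the part of $x$ that the measurement leaves unconstrained, namely its component in $\texttt{ker}(C^i) = \im(V_2)$; this is where the rank cases diverge. If $r^i = n$, then $V_2$ is empty and the previous paragraph already describes all feasible $x$, so nothing more is needed. If $r^i < n$, I would use the orthogonal decomposition $x = V_1 V_1^\top x + V_2 V_2^\top x$ and bound the null-space coordinates using the assumption $\norm{x}_2 \leq K$: each entry obeys $\abs{(V_2^\top x)_j} \leq \norm{V_2^\top x}_2 \leq \norm{x}_2 \leq K \leq M$, so $V_2^\top x = M\beta_2$ for some $\abs{\beta_2}\leq 1$. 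Hence appending the generator block $V_2 M$ with $M \geq K$ covers every admissible null-space component, producing the full generator matrix $G_{x|y^i}$ of \eqref{eq:prop_1_eqn}. The subtlety here is purely that the norm bound must be converted from a $2$-norm constraint on $x$ into entrywise box bounds on $\beta_2$, which the orthonormality of $V_2$ makes routine.

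Finally I would remark that the degenerate output dimension $r^i < p_i$ is handled by the same argument, since redundant output rows correspond to the zero singular-value block of $C^i$ and are discarded by $P_1^\top$. Combining the range-space and null-space parts shows that any $x$ consistent with $y^i(k)$ and the noise bound satisfies $x \in \zono{c_{x|y^i},G_{x|y^i}}$, which completes the proof.
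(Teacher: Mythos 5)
Your proof is correct and follows essentially the same route as the paper: the paper proves this proposition by direct appeal to Lemma~\ref{lm:sigmaM}, whose proof is exactly your SVD-based argument (the range-space zonotope obtained from the noise parametrization, plus the $M V_2$ generator block covering $\texttt{ker}(C^i)$ when $r^i < n$, and the same dismissal of the $r^i < p_i$ case). If anything, your use of the orthogonal decomposition $x = V_1 V_1^\top x + V_2 V_2^\top x$ with the entrywise bound $\abs{(V_2^\top x)_j} \leq \norm{x}_2 \leq M$ is slightly more careful than the paper's shorthand $x = V_1 \Sigma^{-1} P_1^\top (z - \gamma)$, which tacitly denotes only the range-space component of $x$.
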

  \begin{proof}
  The proof follows immediately from Lemma \ref{lm:sigmaM}.
    \end{proof}

  \begin{remark}
    In our case, $\mathcal{Z}_{x|y^i(k)}$ will eventually be 
      intersected with $\Rpredict_k = \zono{\tilde{c}_k, \tilde{G}_k}$.
    It is therefore sufficient to set $M \geq \texttt{radius}(\Rpredict_k) + \|V_2^\top \tilde{c}_k\|_2$
      instead of the more conservative $M \geq \norm{x}_2$,
    where 
      $\texttt{radius}(\Rpredict_k)$ 
      returns the radius of a minimal hyper-sphere containing $\Rpredict_k$
      \cite{conf:cora}.
  \end{remark}
Having determined the sets $\mathcal{Z}_{x|y^i(k)}$ 
  for all $i \in \{1,\dots,q \}$, we can 
  compute the measurement updated set $\Rmeas_k$ given 
  the predicted set $\Rpredict_k$ and each measurement set $\mathcal{Z}_{x|y^i(k)}$
  as   
\begin{align}
  \Rmeas_{k} = \Rpredict_k \cap_{i=1}^q \mathcal{Z}_{x|y^i(k)},
  \label{eq:svd_intersection}
\end{align}  
which can be performed using the standard intersection operations 
  presented in \cite{conf:cora,conf:set-diff}.

\subsubsection{Approach 2 - Implicit Intersection}

Contrary to Approach 1, 
  here, we do not \textit{explicitly} 
  determine the sets $\mathcal{Z}_{x|y^i(k)}$.
Instead, $\hat{\mathcal{R}}_k$ is determined directly from the set $\tilde{\mathcal{R}}_k$, the measurements $y^i(k)$
  and some weights $\lambda_k^i$ for $i \in \{1,\dots,q\}$.
We then optimize over the weights
  to minimize the volume of $\hat{\mathcal{R}}_k$.

\begin{proposition}
  The intersection of $\Rpredict_{k}  = \zono{ \tilde{c}_{k}, \tilde{G}_{k}} $ 
    and the $q$ regions for $x$ corresponding to $y^i(k)$ with noise  
      $v^i(k) \in \mathcal{Z}_{v,i} = \zono{c_{v,i},G_{v,i}}$ 
      satisfying \eqnref{eq:observations} 
    can be over-approximated by the zonotope 
    $\Rmeas_{k} = \zono{ \hat{c}_{k},\hat{G}_{k} } $ with
  \begin{align}
    \hat{c}_{k} &=  
        \tilde{c}_{k} + 
        \sum\limits_{i = 1}^q 
        \lambda_{k}^{i}\Big(y^{i}(k) - C^i \tilde{c}_{k} - c_{v,i} \Big), \label{eq:C_lambda}\\
    \hat{G}_{k} &= 
      \begin{bmatrix} (I - \sum\limits_{i = 1}^q \lambda_{k}^i C^i ) 
      \tilde{G}_{k} & -\lambda_{k}^{1} G_{v,1} & \dots &
      - \lambda_{k}^{q} G_{v,q} \end{bmatrix},
    \label{eq:G_lambda}
  \end{align}
  where 
    $\lambda_{k}^{i} \in \R^{n \times p_i}$ for $i \in \{1,\dots,q \}$ 
  are weights.
  \label{prop:zonotope_opt_method}
\end{proposition}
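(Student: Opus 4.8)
The plan is to prove the inclusion $\Rpredict_k \cap_{i=1}^q \mathcal{Z}_{x|y^i(k)} \subseteq \Rmeas_k$ by a direct membership argument: I take an arbitrary point $x$ in the intersection and exhibit a representation of it as a point of the candidate zonotope $\zono{\hat{c}_k,\hat{G}_k}$ with coefficients bounded by one. First, since $x \in \Rpredict_k = \zono{\tilde{c}_k,\tilde{G}_k}$, there is a vector $\beta$ with $\norm{\beta}_\infty \le 1$ such that $x = \tilde{c}_k + \tilde{G}_k\beta$. Second, since $x$ is consistent with each measurement, \eqnref{eq:observations} gives $y^i(k) - C^i x = v^i(k)$ for some $v^i(k) \in \mathcal{Z}_{v,i} = \zono{c_{v,i},G_{v,i}}$, so there is a $\zeta^i$ with $\norm{\zeta^i}_\infty \le 1$ satisfying $y^i(k) - C^i x - c_{v,i} = G_{v,i}\zeta^i$ for each $i \in \{1,\dots,q\}$.

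The key step is the standard zonotope--strip intersection trick: for \emph{any} choice of the weights $\lambda_k^i$, the quantity $y^i(k) - C^i x - c_{v,i} - G_{v,i}\zeta^i$ vanishes identically, so I may add $\sum_{i=1}^q \lambda_k^i\big(y^i(k) - C^i x - c_{v,i} - G_{v,i}\zeta^i\big)$ to $x$ without changing its value. Substituting $x = \tilde{c}_k + \tilde{G}_k\beta$ into every remaining occurrence of $x$ and collecting the constant part, the part multiplying $\beta$, and the parts multiplying each $\zeta^i$ yields
\begin{align*}
  x &= \Big(\tilde{c}_k + \sum_{i=1}^q \lambda_k^i(y^i(k) - C^i\tilde{c}_k - c_{v,i})\Big) \\
    &\quad + \Big(I - \sum_{i=1}^q \lambda_k^i C^i\Big)\tilde{G}_k\,\beta - \sum_{i=1}^q \lambda_k^i G_{v,i}\zeta^i.
\end{align*}

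I then read off that the constant term equals $\hat{c}_k$ from \eqref{eq:C_lambda} and that the matrices multiplying the stacked coefficient vector $[\beta^\top, \zeta^{1\top}, \dots, \zeta^{q\top}]^\top$ are exactly the blocks of $\hat{G}_k$ from \eqref{eq:G_lambda}. Because every entry of this stacked vector has magnitude at most one, $x$ is a valid point of $\zono{\hat{c}_k,\hat{G}_k} = \Rmeas_k$, which establishes the claimed over-approximation.

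I do not expect a genuine obstacle, since the identity is purely algebraic and holds for arbitrary $\lambda_k^i$; the only point requiring care is the bookkeeping that guarantees the $\beta$- and $\zeta^i$-coefficients are simultaneously feasible, which they are because $\beta$ is fixed by membership in $\Rpredict_k$ while each $\zeta^i$ is fixed independently by its own noise bound, with no coupling between them. It is worth emphasizing in the writeup that the result is an \emph{inclusion} holding for \emph{every} admissible $\lambda_k^i$, not an equality; this freedom in $\lambda_k^i$ is precisely what is later exploited to minimize the volume of $\Rmeas_k$, so the proof should make clear that the containment guarantee is independent of that subsequent optimization.
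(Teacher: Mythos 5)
Your proof is correct and is essentially the same argument as the paper's: a direct membership argument that writes $x = \tilde{c}_k + \tilde{G}_k\beta$, uses measurement consistency $y^i(k) - C^i x - c_{v,i} = G_{v,i}\zeta^i$, and regroups terms (the paper phrases this as adding and subtracting $\sum_i \lambda_k^i C^i \tilde{G}_k z$ rather than adding an identically-zero quantity, but the algebra and the resulting identification of $\hat{c}_k$ and $\hat{G}_k$ with the stacked coefficient vector $[\beta^\top,\zeta^{1\top},\dots,\zeta^{q\top}]^\top$ are identical). Your closing remarks about the inclusion holding for arbitrary $\lambda_k^i$ and the absence of coupling between the coefficient blocks match the paper's treatment, where the choice of $\lambda_k^i$ is deferred to the subsequent Frobenius-norm minimization.
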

\begin{proof}
The proof is based on \cite[Prop.1]{conf:stripzono} 
  but with zonotopes as measurements instead of strips. 
Let $x \in \Rpredict_{k} \cap \mathcal{Z}_{x|y^1} \cap \dots \cap \mathcal{Z}_{x|y^q}$. 
Then there exists a $z$ such that $x = \tilde{c}_{k} + \tilde{G}_{k} z$.
   Adding and subtracting $\sum_{i = 1}^q \lambda_{k}^{i} C^{i}\tilde{G}_k z$ yields
   \begin{equation}
      x = \tilde{c}_k +\sum\limits_{i = 1}^q \lambda_{k}^{i} C^{i} \tilde{G}_k z 
      + 
      ( I - \sum\limits_{i = 1}^q \lambda_{k}^{i} C^{i}) \tilde{G}_k z. 
      \label{equ:x_zono_2}
   \end{equation}
  From \eqnref{eq:observations}, we obtain 
    $C^i x  =  y^i - c_{v,i} - G_{v,i} d^i. $
 Using $x = \tilde{c}_{k} + \tilde{G}_{k} z$ yields
    $ C^{i} \tilde{G}_k z  = y^i(k) - C^{i} \tilde{c}_k  - c_{v,i} - G_{v,i} d^i$, 
  which we insert into \eqref{equ:x_zono_2} to obtain
  \begin{equation*}
    \begin{aligned}
      x &= \tilde{c}_k +\sum\limits_{i = 1}^q \lambda_{k}^{i} \Big(y^i(k)  - C^{i} \tilde{c}_k  - c_{v,i} - G_{v,i} d^i \Big) \\
        & \;\;\;   + \Big( I - \sum\limits_{i = 1}^{q} \lambda_{k}^{i} C^i \Big) \tilde{G}_k z, \\
        &= \underbrace{
          \begin{bmatrix} 
          (I - \sum\limits_{i = 1}^{q} \lambda_{k}^{i} C^i ) \tilde{G}_k & - \lambda_{k}^{1} G_{v,1} & \dots & 
          - \lambda_{k}^{q} G_{v,q} \end{bmatrix}
          }_{\hat{G}_{k}}  
          \!\!  
          \underbrace{
            \begin{bmatrix} 
              z \\ d^1 \\ \vdots \\ d^{q} 
          \end{bmatrix}}_{z^b}  \\
        & \;\;\; +  \underbrace{\tilde{c}_k+\sum\limits_{i = 1}^{q} \lambda_{k}^{i} (y^i(k)  - C^i \tilde{c}_k -c_{v,i})}_{\hat{c}_{k}}
          = \hat{G}_{k} z^b + \hat{c}_{k}.
    \end{aligned}
  \end{equation*}
  Note that $z^b \in [-1,1]$ since $d^i \in  [-1,1]$ and $z \in [-1,1]$.
  $\hat{R}_k$ adheres to \defref{def:zonotope} with center $\hat{c}_{k}$ and generators $\hat{G}_{k}$.
\end{proof}

As in \cite{conf:set-diff},
we find the optimal weights $\lambda_{k}^{i} \in \R^{n \times p_i}$ from 
\begin{align}
  \bar{\lambda}^*_k = \argmin_{\bar{\lambda}_k} \norm{\hat{G}_{k}}^2_F,
  \label{eq:opt_problem}
\end{align}
where $\bar{\lambda}_k = [ \lambda_{k}^{1} \dots \lambda_{k}^{q}]$. 

The online estimation phase 
  is illustrated in the block diagram of \figref{fig:method}.
The detailed estimation phase
  is presented in \algref{alg:svd_method}.
The function \textit{measZon()} 
  executes \propref{prop:measurement_zonotope},
  and \textit{optZon()} 
  \propref{prop:zonotope_opt_method}.
The function \textit{reduce}$(\Rpredict_{k+1})$ reduces the order of $\Rpredict_{k+1}$
  using the method proposed in \cite{Girard2005}, 
    which ensures the number of generators in $\Rpredict_{k+1}$ 
    remains relatively low, avoiding potential 
    tractability issues after multiple iterations.

\begin{algorithm}
  ${\Rmeas}_0 = \mathcal{X}_0$ \\
  $k = 1$ \\
  \While{True}{
  $\Rpredict_{k} = f(\Rmeas_{k-1},\zono{u(k-1),0})$ using \eqnref{eq:Rp}  \\
  \uIf{Approach 1}{
  \ForEach{$i \in \{1,\dots,q \}$}{
    $\mathcal{Z}_{x|y^i(k)} = \textit{measZon}\big(y^i(k),\mathcal{Z}_{v,i},C^i\big)$ 
    using \eqnref{eq:prop_1_eqn} 
  }
  $\Rmeas_{k} = \Rpredict_k \bigcap_{i=1}^q \mathcal{Z}_{x|y^i(k)}$ \\
  }
  \uIf{Approach 2}{
  $\zono{\hat{c}_{k},\hat{G}_{k}} = \textit{optZon}(\Rpredict_k,y(k),C,\mathcal{Z}_{v})$ \\
  $\hat{G}_{k}^*, \; \bar{\lambda}^* \leftarrow $ Solve \eqnref{eq:opt_problem} \\
  $\Rmeas_{k} = \zono{\hat{c}_{k}, \hat{G}_{k}^* }$
  }
  $\Rpredict_{k} = \textit{reduce}(\Rmeas_{k})$ using \cite{Girard2005} \\
  $k \leftarrow k + 1$ 
  }
  \caption{\textit{Online Estimation Phase}}
  \label{alg:svd_method}
\end{algorithm}

\subsection{Online Estimation Phase using Constrained Zonotopes}
\label{sec:estimation_conzonotopes}

When intersecting 
  zonotopes, the result is an over-approximation of the true intersection.
However, it is possible to 
  determine the \textit{exact} intersection of constrained zonotopes.


\begin{definition}(Constrained zonotope \cite{conf:const_zono})
    An $n$-dimensional constrained zonotope is
    \begin{equation}
       \hspace{-2mm} \mathcal{C} = 
         \setdef[x\in\mathbb{R}^n]{x=c_{\mathcal{C}}+G_{\mathcal{C}} \beta, \ A_{\mathcal{C}} \beta=b_{\mathcal{C}}, \, \norm{\beta}_\infty\leq 1}, 
        \label{eq:conszono}
    \end{equation}
    where $c_{\mathcal{C}} \in \R^n$ is the center, 
    $G_{\mathcal{C}}$ $\in$ $\R^{n \times n_g}$ the generator matrix 
    and $A_{\mathcal{C}} \in $ $\R^{n_c \times n_g}$ and $b_{\mathcal{C}} \in \R^{n_c}$ 
    the constraints. 
    In short, 
      we write $\mathcal{C}= \zono{c_{\mathcal{C}},G_{\mathcal{C}},A_{\mathcal{C}},b_{\mathcal{C}}}$.
    \label{def:con_zonotope}
\end{definition}

When using constrained zonotopes, 
  we replace the time and measurement updated sets
  $\Rpredict_k$ and $\Rmeas_k$ by the constrained zonotopes
  $\Cpredict_k$ and $\Cmeas_k$, respectively.

\subsubsection{Approach 1 - Reverse-Mapping}

This approach works directly with constrained zonotopes.
The sets $\mathcal{Z}_{x|y^i(k)}$ of \propref{prop:measurement_zonotope} 
  are constrained 
  zonotopes with no $A_\mathcal{C},b_\mathcal{C}$ constraints.
The intersection in \eqnref{eq:svd_intersection} becomes 
    ${\Cmeas_k = \Cpredict_k \cap_{i=1}^q \mathcal{Z}_{x|y^i(k)}}$
which can be performed as described in \cite{conf:const_zono}.

\subsubsection{Approach 2 - Implicit Intersection}

We adapt \propref{prop:zonotope_opt_method} 
  to use constrained zonotopes.
\begin{proposition}
  The intersection of $\Cpredict_{k}  = \zono{ \tilde{c}_k,\tilde{G}_k,\tilde{A}_k,\tilde{b}_k } $ 
    and $q$ regions for $x$ corresponding to $y^i(k)$
      as in \eqnref{eq:observations} 
    can be described by the constrained zonotope 
    $\Cmeas_k=\zono{\hat{c}_k,\hat{G}_k,\hat{A}_k,\hat{b}_k}$ 
    with weights $\lambda_k^i \in \mathbb{R}^{n\times p_i}$ for $i \in \{1,\dots,q\}$ where 
  \begin{align}
      \hat{c}_k &=  \tilde{c}_k +\sum\limits_{i = 1}^{q} \lambda_{k}^{i} \big(y^i(k)  - C^{i} \tilde{c}_k -c_{v,i} \big), \nonumber\\
      \hat{G}_k &= 
      \begin{bmatrix} 
        (I - \sum\limits_{i = 1}^{q} \lambda_{k}^{i} C^{i} ) \tilde{G}_k
        & - \lambda_{k}^{1} G_{v,1} 
        & \dots 
        & -\lambda_{k}^{q} G_{v,q} 
      \end{bmatrix}, 
      \label{eq:G_lambdapr3}\\
      \hat{A}_k &= 
      \begin{bmatrix}
        \tilde{A}_k  & 0 & \dots & 0 \\
      C^{1} \tilde{G}_k\!\! &\!\!G_{v,1} &\!\!\dots\!\!&\!\! 0 \\
    \vdots \!\!&\!\!  &\!\!\ddots\!\! &\!\!\\
    C^{q} \tilde{G}_k\!\! &\!\! 0  &\!\! \dots\!\! &\!\! G_{v,q} \end{bmatrix}, \\ 
      \hat{b}_k &= 
      \begin{bmatrix}
        \tilde{b}_k \\
        y^1(k) - C^{1} {c}_k - c_{v,1}\\ 
        \vdots \\ 
        y^q(k) - C^{q} {c}_k - c_{v,q}
      \end{bmatrix}. 
      \label{eq:Abbarconstzono}
  \end{align}
  \vspace{-4mm}
\label{prop:con_zonotopes}
\end{proposition}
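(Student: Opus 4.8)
The plan is to mirror the algebra in the proof of \propref{prop:zonotope_opt_method} for the center $\hat{c}_k$ and generator matrix $\hat{G}_k$, and then to show that the augmented constraint pair $(\hat{A}_k,\hat{b}_k)$ encodes \emph{exactly} the membership and measurement-consistency relations. This is what upgrades the claim from an over-approximation (as in \propref{prop:zonotope_opt_method}) to an exact description, so I would prove the two set inclusions separately.

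For the forward inclusion, take $x \in \Cpredict_k \cap \mathcal{Z}_{x|y^1} \cap \dots \cap \mathcal{Z}_{x|y^q}$. Membership $x\in\Cpredict_k$ supplies a $z$ with $x=\tilde{c}_k+\tilde{G}_k z$, $\tilde{A}_k z=\tilde{b}_k$ and $\norm{z}_\infty\leq 1$, while each measurement gives, via \eqnref{eq:observations}, a noise coordinate $d^i$ with $\norm{d^i}_\infty\le 1$ and $C^i x = y^i(k)-c_{v,i}-G_{v,i}d^i$. Adding and subtracting $\sum_i\lambda_k^i C^i\tilde{G}_k z$ exactly as in \propref{prop:zonotope_opt_method} and substituting for $C^i\tilde{G}_k z$ produces $x=\hat{c}_k+\hat{G}_k z^b$ with $z^b=\begin{bmatrix}z^\top & d^{1\top} & \dots & d^{q\top}\end{bmatrix}^\top$ and $\hat{c}_k,\hat{G}_k$ as stated in \eqnref{eq:G_lambdapr3}. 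The additional step relative to the unconstrained case is to read off the constraints: $\tilde{A}_k z=\tilde{b}_k$ is the first block row of $\hat{A}_k z^b=\hat{b}_k$, and rearranging $C^i x = y^i(k)-c_{v,i}-G_{v,i}d^i$ with $x=\tilde{c}_k+\tilde{G}_k z$ yields $C^i\tilde{G}_k z + G_{v,i}d^i = y^i(k)-C^i\tilde{c}_k-c_{v,i}$, which is precisely the $i$-th remaining block row of the structure in \eqnref{eq:Abbarconstzono}. Since $\norm{z}_\infty\le1$ and $\norm{d^i}_\infty\le1$ we have $\norm{z^b}_\infty\le1$, so $x\in\Cmeas_k$.

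For the reverse inclusion, take any $z^b=\begin{bmatrix}z^\top & d^{1\top} & \dots & d^{q\top}\end{bmatrix}^\top$ with $\hat{A}_k z^b=\hat{b}_k$ and $\norm{z^b}_\infty\le1$, and set $x=\hat{c}_k+\hat{G}_k z^b$. The first block row forces $\tilde{A}_k z=\tilde{b}_k$; multiplying each measurement block row $C^i\tilde{G}_k z + G_{v,i}d^i = y^i(k)-C^i\tilde{c}_k-c_{v,i}$ by $\lambda_k^i$ and summing lets me reverse the add-and-subtract manipulation, collapsing $x=\hat{c}_k+\hat{G}_k z^b$ back to $x=\tilde{c}_k+\tilde{G}_k z$. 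Combined with $\tilde{A}_k z=\tilde{b}_k$ and $\norm{z}_\infty\le1$ this gives $x\in\Cpredict_k$, while the same block rows rewrite as $C^i x = y^i(k)-c_{v,i}-G_{v,i}d^i$ with $\norm{d^i}_\infty\le1$, i.e.\ $x\in\mathcal{Z}_{x|y^i}$ for every $i$. Hence $x$ lies in the intersection, and $\Cmeas_k$ satisfies \defref{def:con_zonotope}.

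I expect the main obstacle to be precisely this reverse inclusion: verifying that the manipulation which defines $\hat{c}_k$ and $\hat{G}_k$ is genuinely invertible once the measurement block rows of $\hat{A}_k z^b=\hat{b}_k$ are imposed. This is the step where the constrained-zonotope formulation buys exactness over \propref{prop:zonotope_opt_method}, since there $z$ and the $d^i$ are left unconstrained and the recovery of $x=\tilde{c}_k+\tilde{G}_k z$ fails, forcing an over-approximation. The remaining work — confirming the block-triangular pattern of $\hat{A}_k$ and the entries of $\hat{b}_k$ — is routine bookkeeping.
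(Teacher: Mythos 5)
Your proof is correct and takes essentially the same route as the paper's: the forward inclusion reuses the add-and-subtract algebra of \propref{prop:zonotope_opt_method} while reading off the constraint block rows, and the reverse inclusion partitions $z^b$, uses the first block row to recover $\tilde{A}_k z = \tilde{b}_k$, and uses the measurement block rows to collapse $x = \hat{c}_k + \hat{G}_k z^b$ back to $x = \tilde{c}_k + \tilde{G}_k z$ and to certify $x \in \mathcal{Z}_{x|y^i}$ for each $i$. If anything, your reverse direction makes explicit the cancellation step that the paper's proof only sketches, so no gap remains.
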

\begin{proof}
  We follow a similar approach to \cite[Thm. 6.3]{conf:set-prv} and \cite{conf:const_zono},
    but extend the proof by defining measurement sets as zonotopes instead of strips.  
   $\mathcal{Z}_{x|y^i}$ refers to $\mathcal{Z}_{x|y^i(k)}$ 
     unless specified otherwise.
    Let 
      $x_k \in \tilde{\mathcal{C}}_k \cap \mathcal{Z}_{x|y^1} \cap \dots \cap \mathcal{Z}_{x|y^q} $, 
      then there exists a $z_k\in\left[-1,1\right]$ such that
    \begin{align}
        x_k = \tilde{c}_k + \tilde{G}_k z_k, \hspace{5mm} \tilde{A}_k z_k = \tilde{b}_k. \label{equ:x_zono}  
    \end{align}
    Using \eqnref{eq:observations} and the measurement noise $\zono{ c_{v,i},G_{v,i} }$, we write
      \begin{align}
      C^{i} x  =  y^i(k) - c_{v,i} - G_{v,i} d^i,
      \label{equ:bj}
      \end{align}
    where $d^i \in [-1,1]$. Inserting \eqref{equ:x_zono} into \eqref{equ:bj} yields 
    \begin{align}
      C^{i} \tilde{G}_k z_k = y^i(k) - C^{i} \tilde{c}_k - c_{v,i} - G_{v,i} d^i,  \label{eq:CGz}
    \end{align}
    which, combined 
      with \eqref{equ:x_zono}, yields
    \begin{align}
        \hspace{-1mm} \underbrace{\begin{bmatrix}
          \tilde{A}_k             &  0        & \hspace{-2mm} \dots \hspace{-2mm} & \hspace{-2mm} 0 \\
          C^{1} {G}_k       &  G_{v,1} & \hspace{-2mm} \dots \hspace{-2mm} & \hspace{-2mm} 0 \\
          \vdots            &           & \hspace{-2mm} \ddots \hspace{-2mm} & \hspace{-2mm} \\
          C^{q} {G}_k       &    0      & \hspace{-2mm} \dots \hspace{-2mm}  & \hspace{-2mm} G_{v,q} 
        \end{bmatrix}}_{\hat{A}_k} 
        &
        \underbrace{
          \begin{bmatrix} z_k \\ d^1 \\ \vdots \\ d^{q} \end{bmatrix}
        }_{z_b}
        \hspace{-1mm} = \hspace{-1mm}
        \underbrace{
          \begin{bmatrix}
            \tilde{b}_k \\ y^1(k) - C^{1} {c}_k - c_{v,1} \\ \vdots \\ y^q(k) - C^{q} {c}_k - c_{v,q} 
          \end{bmatrix}
        }_{\hat{b}_k}.
        \label{equ:z_bj}
    \end{align}
    Adding and subtracting $\sum_{i = 1}^q \lambda_{i,k} C^{i} \tilde{G}_k z_k$ to \eqref{equ:x_zono} yields 
    \begin{equation}
      x_k = 
      \tilde{c}_k + \sum_{i = 1}^q \lambda^i_{k} C^{i} \tilde{G}_k z_k 
      + ( I - \sum_{i = 1}^q  \lambda^i_{k} C^{i}) \tilde{G}_k z_k. 
      \label{equ:x_zono2}
    \end{equation}
    If we now insert \eqref{eq:CGz} into \eqref{equ:x_zono2}, 
      we obtain 
    \begin{align*}
      x &= 
      \underbrace{
        \begin{bmatrix} 
          (I - \sum\limits_{i = 1}^{q} \lambda_{k}^{i} C^{i} ) \tilde{G}_{k} & -\lambda_{k}^{1}  G_{v,1} & \dots & -\lambda_{k}^{m_i}  G_{v,q} 
        \end{bmatrix}
      }_{\hat{G}_{k}} z_b \\
      & \;\;\; \; +  
      \underbrace{
        \hat{c}_{k-1} +\sum\limits_{i = 1}^{q} \lambda_{k}^{j} \big( y^i(k)  - C^{i} \tilde{c}_{k} -c_{v,i} \big)}_{\hat{c}_{k}}
      = 
      \hat{G}_{k} z_b + \hat{c}_{k}.
    \end{align*}
    Hence, 
    $x(k) \in \Cmeas_k$ and $(\Cpredict \cap \mathcal{Z}_{x|y^1} \cap \dots \cap \mathcal{Z}_{x|y^q}) \subseteq \Cmeas_k$.
    Conversely, 
      let $x(k) \in \Cmeas_k$.
    Then, there exists a $z_b$ 
      such that \eqref{eq:conszono} in \defref{def:con_zonotope} is satisfied. 
    Partitioning $z_b$ into $z_b =[z_k, d^1 \dots ,d^q]^T$, 
      it follows that we can construct 
      a constrained zonotope 
      $\tilde{\mathcal{C}}_k=\{ \tilde{c}_k,\tilde{G}_k,\tilde{A}_k,\tilde{b}_k \}$ 
      given that $\|z_k\|_\infty \leq 1$. 
    Thus, $x(k) \in \tilde{\mathcal{C}}$. 
    Similarly, we can get the constraints in \eqref{equ:bj}. 
    Inserting \eqref{equ:x_zono} in \eqref{eq:CGz} 
      results in obtaining all the equations in \eqref{equ:bj}. 
    Therefore, $x(k) \in \mathcal{Z}_{x|y^i(k)}$, $\forall i \in \{1,\dots,q\}$. 
    Thus, $x(k) \in (\tilde{\mathcal{C}}_k \cap \mathcal{Z}_{x|y^1} \cap \dots \cap \mathcal{Z}_{x|y^q})$ 
      and $\Cmeas_k  \subseteq (\tilde{\mathcal{C}}_k \cap \mathcal{Z}_{x|y^1} \cap \dots \cap \mathcal{Z}_{x|y^q})$, 
      which concludes the proof. 
 \end{proof}

\section{Evaluation} 
\label{sec:evaluation}

We evaluate our method by considering an input-driven variant 
  of the rotating target described in \cite{conf:set-diff}.
We set
\begin{align}
  A_{\text{tr}} = \begin{bmatrix}
    0.9455 & -0.2426 \\
    0.2486 & 0.9455
  \end{bmatrix},
  \hspace{5mm} 
  B_{\text{tr}} = \begin{bmatrix}
    0.1 \\ 0 
  \end{bmatrix}
\end{align}
with $q=3$ measurements parameterized as follows
\begin{align*}
  &C^1 = \begin{bmatrix}
    1 & 0.4    
  \end{bmatrix}, 
 C^2 = \begin{bmatrix}
    0.9 & -1.2  
  \end{bmatrix},  
  C^3 = \begin{bmatrix}
    -0.8 & 0.2 \\ 0 & 0.7   
  \end{bmatrix}, \\ 
 & \mathcal{Z}_{v,1} = \zono{0,1}, 
   \mathcal{Z}_{v,2} = \zono{0,1}, 
  \mathcal{Z}_{v,3} = \zono{[0\;\; 0]^\top, I_2}. 
\end{align*}
The noise signals are characterized by the zonotopes 
   ${\mathcal{Z}_{\gamma} = \zono{[0\;\; 0]^\top, 0.02I_2}}$ and 
   $\mathcal{Z}_{w} = \zono{[0\;\; 0]^\top, 0.02I_2}$.
We run the \textit{offline learning phase} with $T = 500$
  and inputs sampled uniformly from the set $\mathcal{U} = \zono{0,10}$. 
The noise signals $v^i(k)$, $w(k)$ and $\gamma(k)$ are sampled uniformly
from their respective zonotope sets 
  using the command \textit{randPoint}$(\mathcal{Z})$ 
  as described in \cite{conf:cora}.

After learning $f(\cdot)$, 
  we run the \textit{online estimation phase}.
The initial state set is  
   $\mathcal{X}_0 = \zono{ [0\;\; 0]^\top, 15I_2 }$
   and the true initial state is $x(0) = \begin{bmatrix} -10 & 10 \end{bmatrix}^\top$.
Once again, we sample the inputs uniformly from $\mathcal{U}$.
We evaluate both the zonotope and constrained zonotope 
  methods,
  each time using either of the 
  two proposed measurement update approaches.
\figref{fig:sim_bounds} shows the bounds of $\Rmeas_k$
  in the $x_1$ state dimension
  for both approaches.
\figref{fig:sim_bounds_con} shows the equivalent results when our method
  uses constrained zonotopes.
As expected, 
  $x(k)$ is always contained within $\Rmeas_k$ (or $\Cmeas_k$) 
  at each time step.
Although both measurement update approaches yield similar set sizes on average,
  the set evolution of Approach 2 is comparatively smoother.

Furthermore, 
  we compare our results with \textit{N4SID} subspace identification 
  \cite{VANOVERSCHEE199475} combined with 
  a Kalman filter (KF).
In \figref{fig:sim_snapshot}, we show the sets 
  $\Rmeas_k$ and $\Cmeas_k$,
  using either measurement update approach, using zonotopes or constrained zonotopes.
We also show the ellipse corresponding to the $3\sigma$ uncertainty bound of the 
  KF estimate, indicating that our estimator provides state sets 
  comparable in size to that of the KF. We should mention that KF bounds come without any guarantees. 

\begin{figure}[t!]
  \centering
  \begin{subfigure}[t]{0.49\textwidth} 
      \centering
      \includegraphics[width=0.9\linewidth]{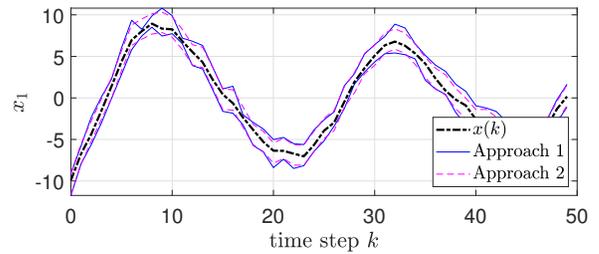}
      \caption{Using zonotopes showing bounds of $\Rmeas_k$ in $x_1$}
      \label{fig:sim_bounds}
      \vspace*{2mm}
  \end{subfigure}
  \begin{subfigure}[t]{0.49\textwidth}
      \centering
      \includegraphics[width=0.9\linewidth]{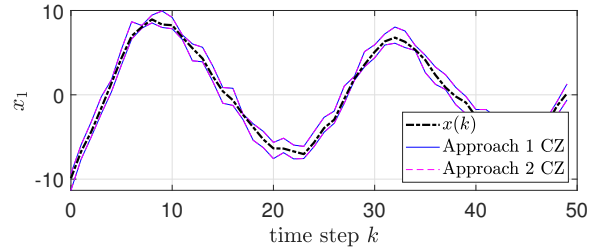}
      \caption{Using constrained zonotopes showing bounds of $\Cmeas_k$ in $x_1$}
      \label{fig:sim_bounds_con}
      \vspace*{2mm}
  \end{subfigure}
  \caption{Bounds of the set $\Rmeas_k$ in (a), and $\Cmeas_k$ in (b),
    projected onto the first state dimension $x_1$ of $x(k)$ 
    using measurement update approaches 1 and 2.}
    \label{fig:sim}
  \vspace*{0mm}
\end{figure}
\begin{figure}[t!]
  \centering 
  \includegraphics[width=0.7\linewidth]{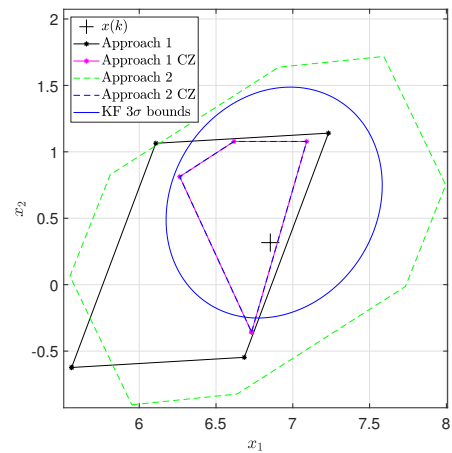}
  \caption{Sets $\Rmeas_k$ using measurement update 
    approaches 1 and 2, 
    and the equivalent sets 
      $\Cmeas_k$ using constrained zonotopes (\textit{CZ}),
      compared to the KF's $3\sigma$ confidence bounds.}
  \vspace*{-4mm}
  \label{fig:sim_snapshot}
\end{figure}

Referring to both \figref{fig:sim} and \figref{fig:sim_snapshot}, it is clear that
  the constrained zonotopes yield smaller state sets at each time step.
However, this comes at the cost of increased computational load.
Running our simulations on a Dell laptop with an 8-core i5-8365U 
  processor at 1.6GHz, the average computation time per iteration for Approach 1 
   increased from $0.656$sec to $1.267$sec.
  when using constrained zonotopes;
for Approach 2, the corresponding times were $0.221$sec and $0.971$sec, respectively.
For all our approaches, 
  we observed that reducing the order of the sets to $5$, 
  which reduces the number of generators in $\Rmeas$ 
  (or $\Cmeas$), was critical to keep the computational load low.

\section{Conclusions and Recommendations} 
\label{sec:conclusions}
In this paper, 
  we introduced a novel zonotope-based method to perform set-based state estimation
  with set containment guarantees 
  using a
  data-driven set propagation function. We presented an approach to compute the set of model that is consistent with the data and noise bounds given input-output data. Then, we presented two approaches to perform the measurement update
  which merges the time updated state set with the observed measurements.
We extended our method to use constrained zonotopes,
  which yielded smaller state sets at the cost of increased computational load.
Our results show state sets comparable in size to 
  the $3\sigma$ uncertainty bounds 
  obtained when running \textit{N4SID} subspace identification and a Kalman filter, 
  but with the added feature of set-containment guarantees
  and without requiring any knowledge 
  of the statistical properties of the noise.

Future work includes evaluating our proposed estimator on 
  real-world examples
  as well as gaining more insight into the limitations of our method
  when applied to more complex dynamical systems.
Additionally, improving the zonotope intersection operation  
  to lessen the degree of over-approximation of the resultant state set  
  would yield tighter state set estimates at each time step.
\section*{Acknowledgement}
This work was supported by the Swedish Research Council, the Knut and Alice Wallenberg Foundation, the Democritus project on Decision-making in Critical Societal Infrastructures by Digital Futures, and the European Unions Horizon 2020 Research and Innovation program under the CONCORDIA cyber security project (GA No. 830927). 

\bibliographystyle{ieeetr}
\bibliography{ref}

\end{document}